\newtheorem{theorem}{Theorem}
\newtheorem{lemma}[theorem]{Lemma}
\theoremstyle{definition}
\DeclareMathOperator{\treedepth}{td}
\begin{document}

\newcommand{\Oh}{\ensuremath{\mathcal{O}}}

\newcommand{\planarname}{\textsc{Vertex Planarization}}
\newcommand{\planarshort}{\textsc{Vertex Planarization}}
\newcommand{\lmsname}{\textsc{$k \times k$ Permutation Clique}}

\newcommand{\defproblem}[4]{
  \vspace{2mm}
\noindent\fbox{
  \begin{minipage}{0.96\textwidth}
  \begin{tabular*}{\textwidth}{@{\extracolsep{\fill}}lr} #1 & {\bf{Parameter:}} #3 \\ \end{tabular*}
  {\bf{Input:}} #2  \\
  {\bf{Question:}} #4
  \end{minipage}
  }
  \vspace{2mm}
}
\newcommand{\defnoparamproblem}[3]{
  \vspace{2mm}
\noindent\fbox{
  \begin{minipage}{0.96\textwidth}
  #1 \\
  {\bf{Input:}} #2  \\
  {\bf{Question:}} #3
  \end{minipage}
  }
  \vspace{2mm}
}

\title{A tight lower bound for Vertex Planarization on graphs of bounded treewidth}

\author{
  Marcin Pilipczuk\thanks{Institute of Informatics, University of Warsaw, Poland, \texttt{malcin@mimuw.edu.pl}}
}

\date{}

\maketitle

\begin{abstract}
In the \textsc{Vertex Planarization} problem one asks to delete the minimum possible number of vertices from an input graph
to obtain a planar graph.
The parameterized complexity of this problem, parameterized by the solution size (the number of deleted vertices)
has recently attracted significant attention.
The state-of-the-art algorithm of Jansen, Lokshtanov, and Saurabh [SODA 2014]
runs in time $2^{\Oh(k \log k)} \cdot n$ on $n$-vertex graph with a solution of size $k$.
It remains open if one can obtain a single-exponential dependency on $k$
in the running time bound.

One of the core technical contributions of the work of Jansen, Lokshtanov, and Saurabh is an algorithm
that solves a weighted variant of \textsc{Vertex Planarization} in time $2^{\Oh(w \log w)} \cdot n$ on graphs
of treewidth $w$. 
In this short note we prove that the running time of this routine is tight under the Exponential Time Hypothesis, even in unweighted graphs and when parameterizing
by treedepth.
Consequently, it is unlikely that a potential single-exponential algorithm for \textsc{Vertex Planarization} parameterized by the solution size
can be obtained by merely improving upon the aforementioned bounded treewidth subroutine.
\end{abstract}

\section{Introduction}
In the \planarname{} problem,
given an undirected graph $G$ and an integer $k$,
our goal is to delete at most $k$ vertices from the graph $G$
to obtain a planar graph.
If $(G,k)$ is a YES-instance to \planarname{}, then we say that $G$
is a \emph{$k$-apex graph}.
Since many algorithms for planar graphs can be easily generalized to 
near-planar graphs --- $k$-apex graphs for small values of $k$ ---
this motivates us to look for efficient algorithms to recognize $k$-apex
graphs. In other words, we would like to solve \planarname{} 
for small values of $k$.

By a classical result of Lewis and Yannakakis~\cite{lewis}, \planarname{}
is NP-hard when $k$ is part of the input.
Since one can check if a given graph is planar in linear time~\cite{planar-check},
\planarname{} can be trivially solved in time $\Oh(n^{k+1})$, where
$n = |V(G)|$, that is, in polynomial time for every fixed value of $k$.
However, such an algorithm is impractical even for small values of $k$;
a question for a faster algorithm brings us to the realms of
\emph{parameterized complexity}.

In the parameterized complexity, every problem comes with a \emph{parameter},
being an additional complexity measure of input instances.
The central notion is a \emph{fixed-parameter algorithm}: an
algorithm that solves an instance
$x$ with parameter $k$ in time $f(k) |x|^{\Oh(1)}$ for some computable
function $f$. 
Such a running time bound, while still super-polynomial (the function $f$ is usually
exponential), is considered significantly better than say
$\Oh(|x|^k)$, as it promises much faster algorithms for moderate values of $k$
and large instances.
We refer to recent textbooks~\cite{ksiazka,DF13} for a more broad introduction
to parameterized complexity.

Due to the aforementioned motivation, it is natural to consider the solution
size $k$ as a parameter for \planarname{}, and ask for a fixed-parameter algorithm.
Since, for a fixed value of $k$, the class of all $k$-apex graphs is closed
under taking minors, the graph minor theory of Roberston and Seymour immediately
yields a fixed-parameter algorithm, but with enormous dependency on the parameter
in the running time bound.\footnote{Formally, this algorithm is non-uniform, that is, it requires an external advice depending on the parameter only.
However, we can obtain a uniform algorithm using the techniques of Fellows and Langston~\cite{FL}.}
The quest for an explicit and faster fixed-parameter algorithm 
for \planarname{} has attracted significant attention
in the parameterized complexity community in the recent years.
First, Marx and Schlotter~\cite{ildi} obtained a relatively simple algorithm,
with doubly-exponential dependency on the parameter and $n^2$ dependency
on the input size in the running time bound.
Later, Kawarabayashi~\cite{kenichi} obtained a fixed-parameter algorithm
with improved linear dependency on the input size, at the cost of 
worse dependency on the parameter.
Finally, Jansen, Lokshtanov, and Saurabh~\cite{jls} developed an algorithm with
running time bound $2^{\Oh(k \log k)} \cdot n$, improving upon
all previous results.

As noted in~\cite{jls}, a simple reduction shows that \planarname{}
cannot be solved in time $2^{o(k)} \cdot n^{\Oh(1)}$ unless the Exponential
Time Hypothesis fails.
Informally speaking, the Exponential Time Hypothesis (ETH)~\cite{eth} asserts
that the satisfiability of $3$-CNF formulae cannot be verified in time
subexponential in the number of variables.
In the recent years, a number of tight bounds for fixed-parameter algorithms
have been obtained using ETH or the closely related Strong ETH;
we refer to~\cite{lms-survey,marx-survey} for an overview.
In this light, it is natural to ask for tight bounds for fixed-parameter algorithms
for \planarname{}. In particular, \cite{jls} asks for a single-exponential (i.e.,
with running time bound $2^{\Oh(k)} n^{\Oh(1)}$) algorithm.

The core subroutine of the algorithm of Jansen, Lokshtanov, and Saurabh,
is an algorithm that solves \planarname{} in time $2^{\Oh(w \log w)} \cdot n$
on graphs of treewidth $w$.
A direct way to obtain a single-exponential algorithm for \planarname{}
parameterized by the solution size would be to improve the running time
of this bounded treewidth subroutine to $2^{\Oh(w)} \cdot n^{\Oh(1)}$.
In this short note we show that such an improvement is unlikely,
as it would violate 
the Exponential Time Hypothesis.

\begin{theorem}\label{thm:main}
Unless the Exponential Time Hypothesis fails,
there does not exist an algorithm that solves \planarname{}
on $n$-vertex graphs of treewidth at most $w$
in time $2^{o(w \log w)} n^{\Oh(1)}$.
\end{theorem}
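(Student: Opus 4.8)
\emph{Proof plan.} The plan is to reduce from the \lmsname{} problem, which asks, given a graph on the vertex set $[k]\times[k]$, whether there is a permutation $\sigma$ of $[k]$ such that $\{(i,\sigma(i)) : i \in [k]\}$ is a clique; this problem admits no $2^{o(k \log k)}$-time algorithm unless ETH fails, as established by Lokshtanov, Marx, and Saurabh. Given such an instance I would construct, in time polynomial in $k$, an (unweighted) graph $G$ together with a budget $b$ so that $G$ admits a planarizing set of size at most $b$ if and only if the \lmsname{} instance is a YES-instance. Crucially, I would aim for $|V(G)| = k^{\Oh(1)}$ and $\treedepth(G) = \Oh(k)$; since treewidth never exceeds treedepth, this simultaneously yields the treewidth statement of Theorem~\ref{thm:main} and the stronger treedepth strengthening advertised in the abstract. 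The final ETH step is then immediate: a $2^{o(w \log w)} n^{\Oh(1)}$ algorithm would solve the constructed instance, and hence \lmsname{}, in time $2^{o(k \log k)} \cdot k^{\Oh(1)} = 2^{o(k \log k)}$.

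First I would build, for each row $i \in [k]$, a \emph{selection gadget} that encodes the value $\sigma(i) \in [k]$. The intended behaviour is that within each selection gadget an optimal solution must delete a prescribed number of vertices, and the only local deletions that restore planarity are those singling out one column $j$, which we read off as $\sigma(i) = j$. The natural way to force this is to plant, inside the gadget, a non-planar configuration (a subdivided $K_5$ or $K_{3,3}$) that can be destroyed within the allotted local budget in exactly $k$ essentially different ways, one per column, so that choosing no column or two columns is too expensive. Summing the local budgets over the $k$ rows fixes the global value of $b$ so tightly that no deletions are left over to repair any violation detected by the verification gadgets described next.

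I would then enforce the two global constraints by attaching \emph{verification gadgets} that create an unavoidable $K_5$- or $K_{3,3}$-minor precisely when a constraint is violated. For the permutation requirement, I would interconnect the selection gadgets of rows $i$ and $i'$ through a gadget that is planar exactly when they select distinct columns and contains an obstruction otherwise, thereby forcing all chosen columns to be distinct. For the clique requirement, for every non-edge $\{(i,j),(i',j')\}$ of the input graph I would add a small gadget, attached to the column-$j$ part of row $i$ and the column-$j'$ part of row $i'$, which is planar unless both of these choices are made simultaneously. Because the budget $b$ is already exhausted inside the selection gadgets, any surviving obstruction is fatal; hence a solution of size $b$ exists if and only if the chosen columns form a permutation and the corresponding vertices form a clique, matching the semantics of \lmsname{}.

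The main obstacle is to design these gadgets so that three requirements hold at once: the planarity obstructions fire \emph{exactly} under the intended combinatorial conditions (both soundness and completeness of the encoding), the deletion budget is tight enough that repairing a fired obstruction is never affordable, and --- most delicately --- the whole construction has $\treedepth(G) = \Oh(k)$. The last point is in tension with the first two, since any faithful encoding of a permutation must carry $\Theta(k\log k)$ bits of information, which is precisely why the $2^{\Oh(w\log w)}$ running time is conjecturally optimal. To control the treedepth I would route all long-range interactions through a set of $\Oh(k)$ \emph{hub} vertices (a constant number per row) whose deletion breaks $G$ into shallow components, and then exhibit an elimination forest of height $\Oh(k)$ to certify the bound. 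Verifying the equivalence of the two instances together with this treedepth bound, via a careful case analysis of how Kuratowski obstructions can arise and be destroyed, is the technical heart of the argument.
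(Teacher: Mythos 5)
Your high-level architecture coincides with the paper's: you reduce from \lmsname{} (the correct source problem, with the correct ETH arithmetic at the end), you aim for an unweighted instance of polynomial size and treedepth $\Oh(k)$, and you organize the reduction as selection gadgets with a tight budget, verification via Kuratowski obstructions, and $\Oh(k)$ hub vertices to certify the treedepth. But the proposal stops exactly where the proof has to begin: every gadget whose existence you postulate is left unconstructed, and you concede yourself that designing them is ``the technical heart of the argument.'' Three concrete mechanisms are missing, and they are precisely what the paper builds. First, a device that forces deletions at all: the paper introduces \emph{$K_5$-edges} (a $K_5$ on $x$, $y$, and three private vertices), so that every solution must delete $x$ or $y$; without this, your claim that ``the only local deletions that restore planarity single out one column'' cannot be arranged. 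Second, a choice gadget that is simultaneously budget-tight and shallow: the paper's $s$-choice gadget has minimum solution size exactly $2s$, leaves at least one selected vertex $b_j$ undeleted in every minimum solution, and has treedepth $\Oh(\log s)$; this is exactly what resolves the tension you flag between tight budget accounting and treedepth $\Oh(k)$, and nothing in your plan resolves it. Third, and most critically, a rigid structure that makes ``the obstruction fires exactly when the constraint is violated'' meaningful: the paper builds a $3$-edge-connected frame $H_F$ with a \emph{unique} planar embedding whose $k$ faces $f^\alpha$ serve as the row slots; each column vertex $x^\beta$ is tied to the frame through three choice gadgets, so planarity forces it into the single face it selected, and two column vertices landing in the same face yield an explicit $K_{3,3}$ minor. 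Without such a shared rigid frame, your pairwise distinctness gadgets between rows $i$ and $i'$ must compare two values in $[k]$ on their own --- a gadget of size $\Omega(k)$ per pair whose planarity behavior you would still need to establish --- whereas in the paper distinctness of all $k$ selections comes for free from the planarity of one global structure.

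A further substantive divergence is your clique-verification step: you attach a gadget per \emph{non-edge} that becomes non-planar when both of its endpoints are selected, with zero budget available to it. The paper uses the opposite, positive encoding: for each pair of rows $p < q$ it introduces deletable vertices $c_e^\vdash, c_e, c_e^\dashv$ for every \emph{edge} in $E(p,q)$, plus a choice gadget $\hat{C}^{p,q}$ whose tight minimum solution forces at least one edge gadget per pair to remain active, which in turn pins the two incident column vertices into the faces $f^p$ and $f^q$. This positive encoding makes the budget $\ell = 3m + 6k^2$ exactly tight and reduces soundness to counting; your non-edge encoding must instead prove that a fired obstruction can never be repaired by redistributing deletions across gadgets, which you assert but do not show. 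None of these gaps is evidently unfixable --- the paper itself shows the program can be carried out --- but as written the proposal is a research plan, not a proof.
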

In fact, our lower bound holds even for a more restrictive parameter
of \emph{treedepth}, instead of treewidth.

While Theorem~\ref{thm:main} does not exclude the possibility
of a $2^{\Oh(k)} n^{\Oh(1)}$-time algorithm for \planarname{},
it shows that to obtain such a running time one needs to circumvent
the usage of bounded-treewidth subroutine on graphs
of treewidth $\Omega(k)$ in the algorithm of Jansen, Lokshtanov, and Saurabh.

The remainder of this paper is devoted to the proof of Theorem~\ref{thm:main}.

\section{Lower bound}\label{sec:lb}


We base our reduction on the framework for proving superexponential lower bounds
introduced by Lokshtanov, Marx, and Saurabh~\cite{lms:superexp}.
For an integer $k$, by $[k]$ we denote the set $\{1,2,\ldots,k\}$.
Consequently, $[k] \times [k]$ is a $k \times k$ table of elements
with \emph{rows} being subsets of the form $\{i\} \times [k]$, and \emph{columns}
being subsets of the form $[k] \times \{i\}$.
We start from the following auxiliary problem.

\defproblem{\lmsname{}}{An integer $k$ and a graph $G$ with vertex set $[k] \times [k]$.}{$k$}{Is there a $k$-clique in $G$ with exactly one element from each row and exactly one element from
each column?}

As proven in \cite{lms:superexp}, an $2^{o(k \log k)}$-time algorithm for \lmsname{}
would violate ETH. Hence, to prove Theorem~\ref{thm:main}, it suffices to prove the following.
\begin{lemma}\label{lem:red}
There exists a polynomial time algorithm that, given an instance $(G,k)$ of \lmsname{},
outputs an equivalent instance $(H,\ell)$ of \planarshort{}
where the treedepth of the graph $H$ is bounded by $\Oh(k)$.
\end{lemma}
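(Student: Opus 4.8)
The plan is to reduce from \lmsname{} by encoding the choice of one clique element per row into a planarization instance. Given $(G,k)$, I would build a graph $H$ together with a budget $\ell$ so that a solution to \planarshort{} of size $\ell$ is forced to "select" exactly one vertex from each row of $[k]\times[k]$, and the selection is consistent with forming a permutation clique in $G$. The governing design principle is the Lokshtanov--Marx--Saurabh framework: the deletion budget must be set exactly tight, so that the solver has no slack and is compelled to make precisely $k$ meaningful choices, each choice ranging over $[k]$ possibilities --- this is what produces a $2^{\Theta(k\log k)}$ barrier rather than a single-exponential one.

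First I would construct, for each row $i\in[k]$, a \emph{selection gadget} whose role is to be planarizable in exactly $k$ different cheapest ways, one per column index $j\in[k]$, where deleting the appropriate vertex corresponds to "row $i$ picks element $(i,j)$." A natural candidate is a small nonplanar gadget (e.g.\ built around subdivided copies of $K_5$ or $K_{3,3}$) that can be made planar only by deleting one of $k$ designated \emph{choice vertices}, and I would give each choice vertex a label telling which column it encodes. The total budget $\ell$ will be the sum of the minimum deletions per gadget plus the cost of the consistency-checking machinery, set so that every gadget must be repaired at minimum cost; deleting a non-choice vertex, or more than one vertex in a gadget, would exceed the budget. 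The second ingredient is a \emph{verification gadget} that, for every pair of selected vertices, forces the edge to be present in $G$ (clique condition) and forces the column indices to form a permutation (distinctness across columns); this is done by planting a forbidden minor that survives planarization precisely when the chosen pair is \emph{bad} (non-adjacent in $G$ or column-colliding), so that no budget-respecting solution exists unless the selection corresponds to a genuine permutation clique.

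The crucial constraint, and the place where the construction must be careful, is the treedepth bound: I must achieve treedepth $\Oh(k)$, which is far more demanding than treewidth $\Oh(k)$ since treedepth forbids long paths. This means the whole consistency-checking apparatus cannot be laid out along a path of $k^2$ gadgets; instead I would route all verification through a small set of $\Oh(k)$ \emph{hub} or \emph{apex} vertices that simultaneously touch many gadgets, so that deleting this $\Oh(k)$-size hub set breaks $H$ into shallow components. Concretely, I expect to designate $\Oh(k)$ special vertices (for instance, one per row together with a bounded amount of global wiring) whose removal leaves a forest-like remainder of depth $\Oh(k)$; since $\treedepth(H)\le |S| + \max_C \treedepth(H-S\text{ restricted to }C)$ for any vertex set $S$, bounding the depth of each leftover component by $\Oh(k)$ suffices.

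The hard part will be reconciling three competing demands simultaneously: correctness of the reduction (only permutation cliques yield budget-$\ell$ solutions), planarity of the gadgets in their "good" configurations (I must verify that once the intended vertices are deleted the remaining graph is genuinely planar, which requires controlling how the many gadgets embed together and do not create spurious crossings), and the $\Oh(k)$ treedepth bound. In particular, the consistency checks between the $\binom{k}{2}$ pairs of rows naively demand quadratically many gadgets, and packing these while keeping treedepth linear is delicate --- I expect to handle this by arranging the pairwise checks so they share the same $\Oh(k)$ hub vertices and by exploiting that, after deleting the hub, each residual gadget is an $\Oh(1)$-depth tree-like object. Verifying planarity of the good configuration and the tightness of the budget (that any cheaper or alternative deletion fails) will be the main technical burden, and I would dispatch it via an explicit planar embedding of the intended reduced graph together with a matching lower-bound argument on the deletion cost of each gadget.
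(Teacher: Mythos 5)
Your high-level strategy is recognizably the same as the paper's: a tight-budget reduction from \lmsname{} in the Lokshtanov--Marx--Saurabh style, selection gadgets built from $K_5$-like obstructions that admit exactly $k$ cheapest repairs, and a treedepth bound obtained by deleting an $\Oh(k)$-size hub set so that the residue falls apart into shallow pieces (the paper deletes exactly such a hub: the $17k$ vertices of a ``frame'' graph plus $k$ special vertices $x^\beta$). However, there is a genuine gap at the core of the proposal: the \emph{verification} step. You say you would ``plant a forbidden minor that survives planarization precisely when the chosen pair is bad,'' but this is a statement of the goal, not a construction, and it is exactly the hard part. Planarity is a global property; a choice made by deleting a vertex inside one selection gadget is not, by itself, visible to a gadget sitting elsewhere, so there is no local way to make a minor's survival conditional on two distant choices. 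The paper's answer to this is a specific rigidity device that your sketch lacks: a $3$-edge-connected frame graph $H_F$, which therefore has a \emph{unique} planar embedding, with $k$ designated faces $f^\alpha$. Each vertex $x^\beta$ is wired (through three choice gadgets) so that it must be embedded inside exactly one face $f^\alpha$; two vertices choosing the same face force a $K_{3,3}$ minor, which makes the chosen assignment a permutation, and each edge gadget for $e\in E(G)$ can be repaired at the cheap rate only when its two endpoints $x^{\gamma(e)},x^{\delta(e)}$ lie in the correct faces. In other words, the ``state'' communicated between gadgets is topological (which face a vertex occupies in the forced embedding), not combinatorial (which vertex was deleted). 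Without this mechanism or an equivalent one, the soundness direction of your reduction has no proof, and your budget-counting argument has nothing to anchor to.

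Two smaller points. First, your premise that treedepth $\Oh(k)$ ``forbids long paths'' and hence rules out laying out $k^2$ gadgets along a path is mistaken: a path on $n$ vertices has treedepth $\Theta(\log n)$, so treedepth $\Oh(k)$ tolerates paths of length exponential in $k$. Indeed the paper's $s$-choice gadget \emph{is} path-like (a chain of $K_5$-edges with pendant attachment vertices $b_j$), and its treedepth is shown to be $\Oh(\log s)$ by recursively splitting the chain in the middle; this is what lets the pairwise-check gadgets $\hat{C}^{p,q}$, which may carry up to $k^2$ options, still contribute only $\Oh(\log k)$ depth after the hub is removed. Second, to make the budget argument airtight you need the selection gadget to satisfy three properties simultaneously (minimum repair cost exactly $2s$, existence of a minimum repair sparing any prescribed $b_j$, and \emph{every} minimum repair sparing at least one $b_j$); your sketch asserts such a gadget ``should'' exist but does not exhibit one, and exhibiting it with the treedepth bound is a nontrivial part of the proof you would still have to supply.
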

That is, as announced in the introduction,
we in fact prove a stronger variant of Theorem~\ref{thm:main}, refuting
an existence of a $2^{o(w \log w)} n^{\Oh(1)}$-time algorithm for \planarshort{}
parameterized by the treedepth of the input graph.
Recall that the treedepth of a graph $G$, denoted $\treedepth(G)$, is always not smaller than the treewidth of $G$, and satisfies the following recursive 
formula.
\begin{lemma}[\cite{td-bound}]\label{lem:td}
The treedepth of an empty graph is $0$, and the treedepth of a one-vertex graph equals $1$.
The treedepth of a disconnected graph $G$ equals the maximum of the treedepth of the connected components of $G$.
The treedepth of a connected graph $G$ is equal to
$$\treedepth(G) = 1 + \min_{v \in V(G)} \treedepth(G - \{v\}).$$
\end{lemma}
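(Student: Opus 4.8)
The plan is to work from the standard definition of treedepth in terms of \emph{elimination forests}: a treedepth decomposition of a graph $G$ is a rooted forest $F$ with vertex set $V(G)$ such that for every edge $uv \in E(G)$, one of $u,v$ is an ancestor of the other in $F$, and $\treedepth(G)$ is the minimum over all such $F$ of the height of $F$ (the maximum number of vertices on a root-to-leaf path). With this definition the base cases are immediate: the empty graph admits the empty forest of height $0$, and a single vertex forces a one-vertex tree of height $1$. For each of the two recursive claims I would prove the matching inequalities $\le$ and $\ge$ separately, by transforming elimination forests back and forth; the only genuine content lies in two structural observations about how the forest interacts with connectivity and with vertex deletion.

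For the disconnected case, write $G$ as the disjoint union of its connected components $G_1, \dots, G_t$. For $\treedepth(G) \le \max_i \treedepth(G_i)$, take an optimal decomposition $F_i$ of each $G_i$ and let $F$ be their disjoint union; since there are no edges between distinct components, every edge is already covered inside some $F_i$, so $F$ is valid and its height is $\max_i \operatorname{height}(F_i)$. For the reverse inequality, start from an optimal decomposition $F$ of $G$ and, for a fixed $i$, restrict it to $V(G_i)$, keeping the \emph{induced} ancestor relation (that is, contracting away vertices not in $V(G_i)$). Restriction preserves the ancestor/descendant relation and cannot increase the height, and every edge of $G_i$ remains covered, so $\treedepth(G_i) \le \operatorname{height}(F) = \treedepth(G)$; maximizing over $i$ gives the claim.

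For the connected case I would first record the structural fact that drives everything: if $G$ is connected, then in any treedepth decomposition $F$ of $G$ all of $V(G)$ lies in a single tree of $F$. Indeed, if two vertices lay in different trees, a path of $G$ joining them would contain an edge $uv$ with $u,v$ in distinct trees, and then neither endpoint is an ancestor of the other, contradicting validity. Hence an optimal decomposition of a connected $G$ is a single rooted tree $T$ with some root $r$. For $\treedepth(G) \ge 1 + \min_{v} \treedepth(G-v)$, delete $r$ from $T$: the remaining forest is a valid decomposition of $G - r$ of height $\operatorname{height}(T) - 1$, whence $\treedepth(G - r) \le \treedepth(G) - 1$. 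For the reverse inequality, fix any $v$, take an optimal decomposition $F'$ of $G - v$, and form a single tree by making $v$ the root with the trees of $F'$ hung beneath it; since $v$ is then an ancestor of every other vertex, all edges incident to $v$ are covered, the remaining edges are covered by $F'$, and the height is $1 + \treedepth(G - v)$. Minimizing over $v$ and combining the two bounds yields the stated equality.

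The main obstacle, such as it is, lies in the structural lemma that connectivity forces a single tree, together with the care needed in verifying that restricting a forest to a vertex subset preserves the ancestor relation and does not increase the height; these are precisely the points where one must be explicit about what ``ancestor'' means once intermediate vertices are removed. Everything else is routine bookkeeping on rooted forests.
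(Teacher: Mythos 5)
Your proof is correct, and it is essentially the canonical argument: the paper states this lemma with a citation and gives no proof of its own, and the elimination-forest argument you give (disjoint union plus restriction for the disconnected case; the single-tree observation, root deletion, and root attachment for the connected case) is exactly how the standard references establish it. The two delicate points you flag are handled properly --- restriction to $V(G_i)$ preserves ancestry because the ancestors of any vertex form a chain, and connectivity forces one tree since an edge between distinct trees could not be covered --- so nothing is missing.
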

We refer to the textbook~\cite{sparsity} for more information on treedepth.

The rest of this section is devoted to the proof of Lemma~\ref{lem:red}.

\subsection{One-in-many gadget}

We begin with a description of a gadget that allows us to encode a choice among
many options.

Given two vertices $x$ and $y$, by \emph{introducing a $K_5$-edge $xy$}
we mean the following operation: we introduce three new vertices $z_1$, $z_2$ and $z_3$
and make $x, y, z_1, z_2, z_3$ a clique. Note that in every solution to \planarshort{},
at least one of the vertices of the set $\{x,y,z_1,z_2,z_3\}$ needs to be deleted.
As we do not add any more edges incident to any vertex $z_i$, $i=1,2,3$, we may safely
restrict ourselves to solutions to \planarshort{} that contain $x$ or $y$
and do not contain any of the vertices $z_i$, $i=1,2,3$. That is, we treat
the vertices $z_i$ as undeletable vertices, and henceforth by a ``solution to \planarshort{}''
we mean a solution not containing any such vertex.

For an integer $s \geq 1$, we define an \emph{$s$-choice} gadget $C_s$ as follows.
We start with $3s+2$ vertices denoted $a_i$ for $0 \leq i \leq 2s+1$ and
$b_j$ for $1 \leq j \leq s$.
Then, for each $0 \leq i < 2s+1$ we introduce a $K_5$-edge $a_ia_{i+1}$
and for each $1 \leq j < s$ we introduce two $K_5$-edges $b_ja_{2j-1}$ and $b_ja_{2j}$.
Any choice gadget created in the construction
will be attached to the rest of the graph using the vertices $b_j$;
informally speaking, in any optimal solution, exactly one vertex $b_j$ remains undeleted.
We summarize the properties the $s$-choice gadget in the following lemma; see Figure~\ref{fig:red:choice} for an illustration.

\begin{figure}
\begin{center}
\includegraphics{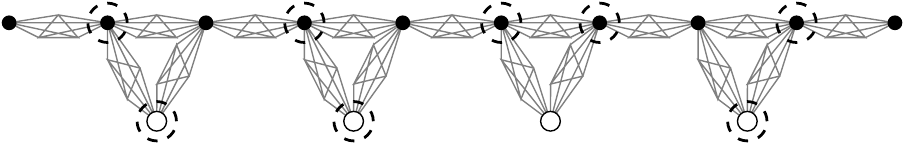}
\caption{Choice gadget $C_4$. The vertices $a_i$ are black and the vertices $b_j$ are white.
  A minimum solution that leaves $b_3$ undeleted is marked with dashed circles.}
  \label{fig:red:choice}
  \end{center}
  \end{figure}

\begin{lemma}\label{lem:red:choice}
For an $s$-choice gadget $C_s$, the following holds.
\begin{enumerate}
\item A minimum solution to \planarshort{} on $C_s$ consists of $2s$ vertices.
\item For every $1 \leq j \leq s$ there exists a minimum solution $X$ to \planarshort{} on $C_s$
that contains all vertices $b_{j'}$ for $j' \neq j$.
\item In every minimum solution to \planarshort{} on $C_s$, at least one vertex $b_j$ remains undeleted.
\item The treedepth of the $s$-choice gadget is $\Oh(\log s)$.
Furthermore, the same treedepth bound holds for a graph constructed from
an $s$-choice gadget by, for every $1 \leq j \leq s$, introducing a
constant-size graph $G_j$ and identifying one vertex of $G_j$
with the vertex $b_j$.
\end{enumerate}
\end{lemma}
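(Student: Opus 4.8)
The plan is to discard planarity in favour of a purely combinatorial condition. Fix a candidate solution $X$ that, as the preceding discussion permits, avoids all the $z$-vertices. Since the three $z$-vertices of a $K_5$-edge $xy$ have no neighbours outside $\{x,y,z_1,z_2,z_3\}$, the $K_5$ survives in $C_s-X$ unless $x\in X$ or $y\in X$; hence $X$ must be a vertex cover of the auxiliary graph $G'$ with vertex set $\{a_0,\dots,a_{2s+1}\}\cup\{b_1,\dots,b_s\}$ whose edges are exactly the $K_5$-edges, i.e.\ the path $a_0a_1\cdots a_{2s+1}$ together with, for each $j$, a triangle on $a_{2j-1}a_{2j}b_j$. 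Conversely, if $X$ is such a vertex cover then the surviving $a$- and $b$-vertices form an independent set of $G'$, so every remaining edge of $C_s-X$ lies inside a single former $K_5$; each block of $C_s-X$ is therefore contained in a $K_4$, and $C_s-X$ is planar. Thus the ($z$-avoiding) solutions to \planarshort{} on $C_s$ are precisely the vertex covers of $G'$, and I can argue everything through minimum vertex covers, equivalently maximum independent sets, of $G'$.

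With this translation, items~1--3 reduce to short counting. The $s$ triangles $\{a_{2j-1},a_{2j},b_j\}$ are pairwise disjoint and each forces two vertices into any cover, so every solution has size at least $2s$; as $\{a_1,\dots,a_{2s}\}$ is a cover of size $2s$, item~1 follows. Since $G'$ has $3s+2$ vertices and $\{a_0,b_1,\dots,b_s,a_{2s+1}\}$ is independent of size $s+2$, the maximum independent set has size exactly $s+2$ and minimum covers size exactly $2s$; because any independent set avoiding every $b_j$ sits inside the path on $2s+2$ vertices and so has size at most $s+1<s+2$, no minimum cover can contain all the $b_j$, which is item~3. For item~2, deleting $a_{2j-1},a_{2j}$ splits the path into odd sub-paths on $2j-1$ and $2s-2j+1$ vertices; the maximum independent set of each, together with $b_j$, is an independent set of size $j+(s-j+1)+1=s+2$ that contains $b_j$ and no other $b_{j'}$, and its complement is the required minimum solution keeping all $b_{j'}$ with $j'\neq j$.

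For item~4 I will bound $\treedepth$ directly through the recursive formula of Lemma~\ref{lem:td} by a divide-and-conquer elimination along the backbone $a_0,\dots,a_{2s+1}$. The crucial observation is that deleting a single backbone vertex $a_m$ disconnects the gadget: the $z$-vertices of each $K_5$-edge, and each $b_j$ with its attached constant-size $G_j$, are adjacent only to one or two consecutive backbone vertices, so nothing links the part carried by $a_0,\dots,a_{m-1}$ to the part carried by $a_{m+1},\dots,a_{2s+1}$. Taking $a_m$ in the middle of the current segment, the depth needed to eliminate $L$ backbone vertices obeys $T(L)\le 1+T(\lceil L/2\rceil)$ with a constant-size base case (one backbone vertex, its incident $z$-triangles, its $b_j$, and $G_j$), giving $T(2s+2)=\Oh(\log s)$; the graphs $G_j$ are absorbed into the base case, yielding the ``furthermore'' part.

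The step I expect to require the most care is the planarity equivalence of the first paragraph, namely confirming that destroying every $K_5$ is \emph{sufficient} and not merely necessary, so that no further $K_5$- or $K_{3,3}$-obstruction is created by the interleaved triangles and $K_5$-edges; the clean argument above, that a vertex cover leaves the $a,b$-vertices mutually non-adjacent and hence confines all surviving edges to individual $K_4$'s, is what makes this go through. Everything else — the cover/independent-set bookkeeping and the logarithmic treedepth recursion — is then routine, the only other point needing attention being the verification that removing one backbone vertex truly separates the two sides, which rests on the strict locality of all attachments to consecutive $a$-vertices.
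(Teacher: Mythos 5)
Your proof is correct, and while it rests on the same combinatorial skeleton as the paper's --- the disjoint triangles $\{a_{2j-1},a_{2j},b_j\}$ give the lower bound of $2s$, explicit sets give the upper bound, and divide-and-conquer along the path $a_0,\ldots,a_{2s+1}$ gives the $\Oh(\log s)$ treedepth bound --- it is packaged differently in two ways worth noting. First, you prove a two-way equivalence between $z$-avoiding solutions and vertex covers of the auxiliary graph $G'$: the paper uses only the easy direction (each $K_5$ must be hit inside $\{x,y\}$) and merely asserts that its explicit $2s$-element sets are solutions, whereas your block argument (a cover leaves the $a$- and $b$-vertices independent, so every block of $C_s-X$ lies inside a single former $K_5$, hence inside a $K_4$) actually verifies planarity, and it turns items 1--3 into uniform independent-set counting. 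Second, your treedepth recursion eliminates a \emph{backbone vertex} $a_m$ at each level; since every attachment ($z$-triangles, $b_j$'s, $G_j$'s) is local to one or two consecutive backbone vertices, this genuinely disconnects the remaining graph. The paper instead deletes the three $z$-vertices of the middle $K_5$-edge $a_{2\lfloor s/2\rfloor}a_{2\lfloor s/2\rfloor+1}$ and claims the gadget then splits into two components --- but introducing a $K_5$-edge $xy$ makes $\{x,y,z_1,z_2,z_3\}$ a clique, so the edge $xy$ itself survives that deletion and the graph stays connected; the paper's step needs the small repair of also deleting an endpoint (or the whole $K_5$), which is in effect what your version does. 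So your route buys a fully verified planarity direction and sidesteps a minor inaccuracy in the paper's own proof of item 4, at the cost of a slightly longer setup.
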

\begin{proof}
First, note that for every $1 \leq j \leq s$, the set
$$\{b_{j'}: j' \neq j\} \cup \{a_{2j'-1}: 1 \leq j' \leq j\} \cup \{a_{2j'}: j \leq j' \leq s\}$$
is a solution to \planarshort{} on $C_s$ of size $2s$ that contains all vertices $b_{j'}$
except for $b_j$.
Moreover, observe that, due to $K_5$-edges, for every $1 \leq j \leq s$, any
solution to \planarshort{} on $C_s$ needs to delete at least two vertices from the set
$\{a_{2j-1}, a_{2j}, b_j\}$, and consequently has size at least $2s$. This settles the first
two claims.

For the third claim, note that any vertex cover of a path of length $2s+1$ needs to
contain at least $s+1$ vertices, and consequently
a solution to \planarshort{} on $C_s \setminus \{b_j: 1 \leq j \leq s\}$ needs to contain at least $s+1$ vertices. Thus, any solution to \planarshort{} on $C_s$ that contains $\{b_j: 1 \leq j \leq s\}$
contains at least $2s+1$ vertices. This settles the third claim.

We prove the last claim by induction on $s$. For $s = \Oh(1)$, the gadget and the attached graphs $G_j$ are of constant size, and the treedepth is constant.
Otherwise, for an $s$-choice gadget $C_s$, we delete the three vertices $z_1,z_2,z_3$ from the $K_5$-edge
between $a_{2\lfloor s/2 \rfloor}$ and $a_{2\lfloor s/2 \rfloor + 1}$. Note that the gadget splits into two connected components,
both being subgraphs of a graph in question constructed from a $\lceil s/2 \rceil$-choice gadget. The treedepth bound $\Oh(\log s)$ follows from Lemma~\ref{lem:td}.
%
\end{proof}

\subsection{Construction}
We now give a construction of the \planarshort{} instance $(H,\ell)$, given a \lmsname{}
instance $(G,k)$. Let $m = |E(G)|$ and assume $k \geq 2$.

First, we introduce a \emph{frame} graph $H_F$.
A \emph{ladder of length $n$} is a $2n$-vertex graph that consists of two paths
$v_1,v_2,\ldots,v_n$ and $u_1,u_2,\ldots,u_n$ together with edges $v_iu_i$ for $1 \leq i \leq n$.
A \emph{cycle ladder of length $n$} additionally contains edges $v_nv_1$ and $u_nu_1$, that is,
$v_1,v_2,\ldots,v_n$ and $u_1,u_2,\ldots,u_n$ are in fact cycles.
The frame graph $H_F$ consists of two cycle ladders of length $2k$, with vertex sets
$\{v_i^\Gamma,u_i^\Gamma: 1 \leq i \leq 2k\}$ for $\Gamma \in \{L,R\}$, and
of $k$ ladders of length $4$ with vertex sets $\{v_i^\alpha,u_i^\alpha: 0\leq i \leq 3\}$
for $1 \leq \alpha \leq k$, connected with edges
$v_{2\alpha-1}^Lv_0^\alpha, v_{2\alpha}^Lu_0^\alpha, v_{2\alpha-1}^Rv_{3}^\alpha, v_{2\alpha}^Ru_{3}^\alpha$ for each $1 \leq \alpha \leq k$ (see Figure~\ref{fig:red:frame}).
Note that $H_F$ is $3$-edge-connected and hence has a unique planar embedding.
By $f^\alpha$ we denote the face of the embedding of $H_F$ that is incident to all vertices $v_i^\alpha$, $0 \leq i \leq 3$.

\begin{figure}
\begin{center}
\includegraphics{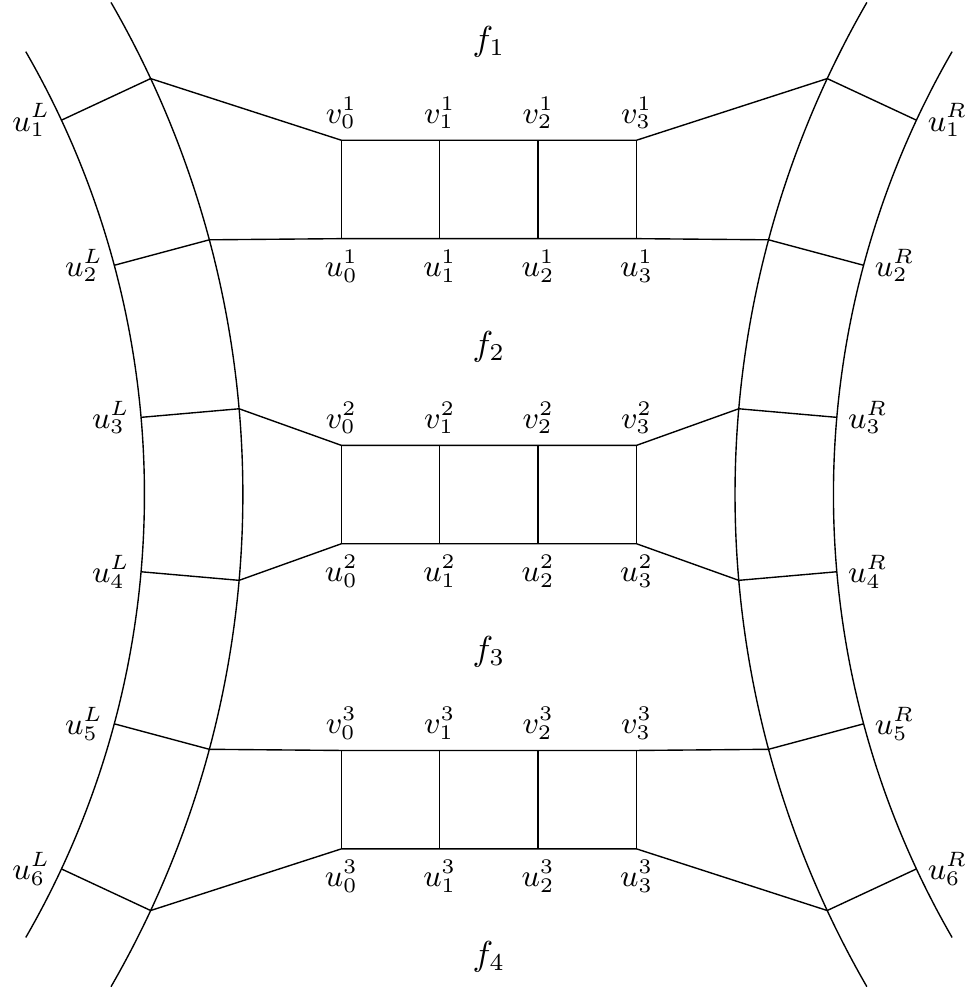}
\caption{Part of the frame graph $H_F$ with its unique embedding.}
  \label{fig:red:frame}
  \end{center}
  \end{figure}

Second, we introduce $k$ vertices $x^\beta$, $1 \leq \beta \leq k$.
Our intention is to ensure that
in any solution to \planarshort{} on $(H,\ell)$, no vertex $x^\beta$ nor
no vertex from the frame $H_F$ will be deleted,
and each vertex $x^\beta$
will be embedded into a different face $f^\alpha$. The choice of which vertex $x^\beta$ is embedded
into which face will correspond to a choice of the vertices of the clique in the instance
$(G,k)$ of \lmsname{}.
We now force such a behavior with some gadgets.

For each $i=1,2,3$, perform the following construction.
For every $1 \leq \alpha,\beta \leq k$, introduce a vertex $y_i^{\alpha,\beta}$ incident
to $v_i^\alpha$ and $x^\beta$.
Moreover, for every $1 \leq \beta \leq k$, introduce a $k$-choice gadget $C^{i,\beta}$
and for each $1 \leq \alpha \leq k$ identify the vertex $b_\alpha$ of $C^{i,\beta}$
with $y_i^{\alpha,\beta}$. See Figure~\ref{fig:red:xv}.
Informally speaking, by the properties of the $k$-choice gadget, each vertex $x^\beta$
needs to select one face $f^\alpha$ that will contain it in the planar embedding.
The fact that the construction is performed three times ensures that no face $f^\alpha$
is chosen by two vertices $x^\beta$, as otherwise a $K_{3,3}$-minor will be left in the graph.

\begin{figure}
\begin{center}
\includegraphics{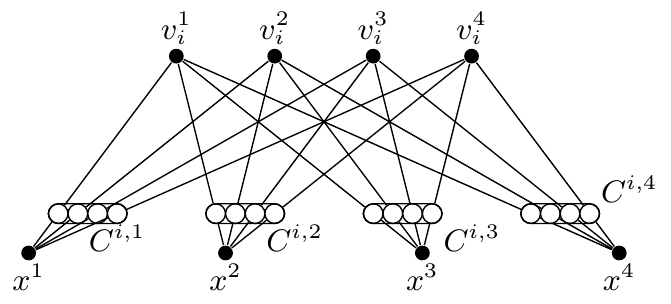}
\caption{Connections between the frame and vertices $x^\beta$ that ensure
that every vertex $x^\beta$ is embedded into a different face $f^\alpha$.
Vertices $y_i^{\alpha,\beta}$ are depicted white.
}
  \label{fig:red:xv}
  \end{center}
  \end{figure}

Let us now move to the description of the encoding of the edges of $G$.
For an edge $e \in E(G)$, let $e = (p(e),\gamma(e))(q(e),\delta(e))$ where $p(e) < q(e)$ (note that
edges $e$ with $p(e)=q(e)$ are irrelevant to the problem, and hence we may assume
there are no such edges).
For $1 \leq p < q \leq k$, we define $E(p,q) = \{e \in E(G) : (p(e),q(e)) = (p,q)\}$.
For each edge $e \in E(G)$, we introduce in $H$ three vertices
$c_e^\vdash, c_e, c_e^\dashv$,
four edges $v_0^{p(e)}c_e^\vdash, x^{\gamma(e)}c_e^\vdash, v_0^{q(e)}c_e^\dashv, x^{\delta(e)}c_e^\dashv$
and two $K_5$-edges $c_e^\vdash c_e$ and $c_e^\dashv c_e$ (see Figure~\ref{fig:red:edge}).
Moreover, for every $1 \leq p < q \leq k$,
introduce a $|E(p,q)|$-choice gadget $\hat{C}^{p,q}$ and for every
$e \in E(p,q)$ identify $c_e^\vdash$ with a distinct vertex $b_j$ of $\hat{C}^{p,q}$.
Informally speaking, for each edge $e$ we need either to delete $c_e^\vdash$ and $c_e^\dashv$
or only $c_e$; however, the second option is only possible if $x^{\gamma(e)}$ is embedded
into $f_{p(e)}$ and at the same time $x^{\delta(e)}$ is embedded into $f_{q(e)}$.
The choice gadget $\hat{C}^{p,q}$ ensures that we can choose the second, cheaper option
only once per each pair $(p,q)$.

\begin{figure}
\begin{center}
\includegraphics{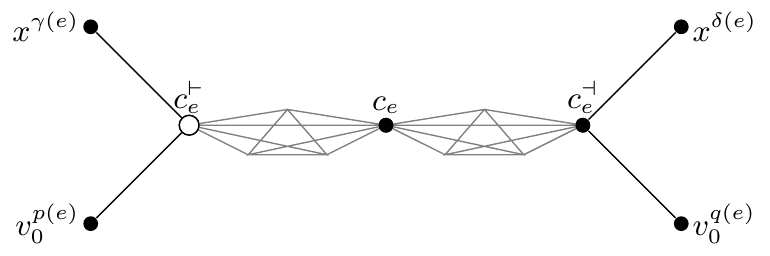}
\caption{Gadget introduced for an edge $e \in E(G)$.
  The vertex $c_e^\vdash$, marked white, is part of the choice gadget $\hat{C}^{p(e),q(e)}$.}
  \label{fig:red:edge}
  \end{center}
  \end{figure}

We set $\ell = 3m + 6k^2$. This completes the description of the instance $(H,\ell)$.
Note that the budget $\ell$ is tight: it allows only to choose a minimum solution
in all introduced choice gadgets, and one endpoint of each $K_5$-edge $c_e^\dashv c_e$.

\subsection{Treedepth bound}

\begin{lemma}
The treedepth of $H$ is $\Oh(k)$.
\end{lemma}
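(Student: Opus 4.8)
The plan is to bound $\treedepth(H)$ by deleting a carefully chosen set of $\Oh(k)$ vertices so that each remaining connected component has treedepth $\Oh(k)$, and then invoke Lemma~\ref{lem:td}, which tells us that $\treedepth(H) \leq |D| + \max_C \treedepth(C)$ for any deletion set $D$ with resulting components $C$. The natural set $D$ to delete is the entire ``core'' of the construction: all $2 \cdot 4k = 8k$ vertices of the two cycle ladders $\{v_i^\Gamma, u_i^\Gamma\}$, together with the $k$ vertices $x^\beta$. This is $\Oh(k)$ vertices. The reason this is the right choice is that the cycle ladders and the $x^\beta$'s are exactly the high-degree ``hubs'' that glue otherwise-independent pieces together; removing them should disconnect $H$ into small, low-treedepth fragments.

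After deleting $D$, I would analyze what survives. First I would observe that each of the $k$ ladders of length $4$ (the $\{v_i^\alpha, u_i^\alpha : 0 \leq i \leq 3\}$) becomes a constant-size graph once its four connecting edges to the cycle ladders are gone; these contribute only $\Oh(1)$ treedepth. The more delicate pieces are the choice gadgets. Each $k$-choice gadget $C^{i,\beta}$ is attached to the rest of $H$ only through its vertices $b_\alpha = y_i^{\alpha,\beta}$, and each such $y_i^{\alpha,\beta}$ has its only external neighbors among $v_i^\alpha$ and $x^\beta$ — both of which lie in $D$. Hence, after deleting $D$, each choice gadget $C^{i,\beta}$ (with its pendant $y$-vertices now serving as the $b_j$'s, carrying no further attachments) becomes its own connected component, and by the ``furthermore'' part of Lemma~\ref{lem:red:choice}(4) its treedepth is $\Oh(\log k) = \Oh(k)$. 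The same reasoning applies to the edge gadgets: for each $e$, the vertices $c_e^\vdash, c_e, c_e^\dashv$ connect to the outside only via $v_0^{p(e)}, v_0^{q(e)}, x^{\gamma(e)}, x^{\delta(e)}$, all of which are in $D$; after deletion, each edge gadget together with its choice gadget $\hat{C}^{p,q}$ forms a component whose treedepth is again $\Oh(\log |E(p,q)|) = \Oh(\log m) = \Oh(k)$ by Lemma~\ref{lem:red:choice}(4). (Note that since the whole input has polynomial size, $\log m = \Oh(k\log k)$ a priori, but the choice-gadget treedepth bound is logarithmic and the relevant components are subgraphs of single choice gadgets, so this stays within $\Oh(k)$; I would verify the constant explicitly.)

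A subtlety I would address is that the $K_5$-edges introduce three auxiliary vertices $z_1, z_2, z_3$ per edge, and these must also be accounted for in the component analysis. Since the $z_i$ attach only to the two endpoints of their $K_5$-edge, they never increase connectivity across components and add at most $\Oh(1)$ to the treedepth of whichever component contains them; I would fold them into the relevant gadget's treedepth bound. The main obstacle I anticipate is bookkeeping the treedepth of the components that mix an edge gadget with its choice gadget $\hat{C}^{p,q}$: I need to confirm that the identification of $c_e^\vdash$ with a $b_j$-vertex of $\hat{C}^{p,q}$ still falls under the ``constant-size graph $G_j$ identified with $b_j$'' clause of Lemma~\ref{lem:red:choice}(4), which requires checking that, once $D$ is removed, the extra structure hanging off each $b_j$ (namely $c_e, c_e^\dashv$ and their $K_5$ vertices) is indeed of constant size and attached only at $b_j$. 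Granting this, each surviving component has treedepth $\Oh(\log \max(k,m)) = \Oh(k)$, and since $|D| = \Oh(k)$, Lemma~\ref{lem:td} gives $\treedepth(H) = \Oh(k) + \Oh(k) = \Oh(k)$, as claimed.
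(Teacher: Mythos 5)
Your overall strategy---delete an $\Oh(k)$-vertex ``core,'' apply the recursion of Lemma~\ref{lem:td}, and bound the surviving components via Lemma~\ref{lem:red:choice}(4)---is exactly the paper's, but your deletion set $D$ is wrong, and this breaks the component analysis. You delete only the two cycle ladders $\{v_i^\Gamma,u_i^\Gamma\}$, $\Gamma\in\{L,R\}$, and the vertices $x^\beta$. However, the vertices $v_i^\alpha$ ($1\leq\alpha\leq k$, $0\leq i\leq 3$) to which the gadget vertices $y_i^{\alpha,\beta}$, $c_e^\vdash$ and $c_e^\dashv$ attach belong to the $k$ ladders of length $4$, \emph{not} to the cycle ladders; they are not in your $D$. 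So your key claim that ``each such $y_i^{\alpha,\beta}$ has its only external neighbors among $v_i^\alpha$ and $x^\beta$---both of which lie in $D$'' is false, and the same error recurs for the edge gadgets (whose attachment points $v_0^{p(e)}, v_0^{q(e)}$ also survive). After removing your $D$, the length-$4$ ladders, all $3k$ gadgets $C^{i,\beta}$, and all edge gadgets merge into one giant connected component: ladder $\alpha$ meets $C^{i,\beta}$ at $y_i^{\alpha,\beta}$, and $C^{i,\beta}$ meets every other ladder $\alpha'$ at $y_i^{\alpha',\beta}$. This component is not of the form covered by Lemma~\ref{lem:red:choice}(4); in fact, contracting each ladder and each gadget $C^{i,\beta}$ to a single vertex exhibits a $K_{k,3k}$ minor, so its treedepth is at least $k+1$, far above the $\Oh(\log k)$ (or even $\Oh(\log m)$) that your analysis asserts for every component.

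The gap is fixable, and the fix is precisely what the paper does: put the \emph{entire} frame $H_F$ into $D$, i.e., also delete the $8k$ vertices $v_i^\alpha,u_i^\alpha$ of the length-$4$ ladders, so that $|D|=17k$. Then every neighbor of a gadget vertex outside its own gadget really does lie in $D$, and the components are exactly the gadgets $C^{i,\beta}$ (treedepth $\Oh(\log k)$ by Lemma~\ref{lem:red:choice}(4)) and the gadgets $\hat{C}^{p,q}$ together with their attached vertices $c_e^\vdash, c_e, c_e^\dashv$ and $K_5$-vertices (treedepth $\Oh(\log|E(p,q)|)=\Oh(\log k)$ since $|E(p,q)|\leq k^2$); Lemma~\ref{lem:td} then gives $\treedepth(H)\leq 17k+\Oh(\log k)=\Oh(k)$. (One could also salvage your smaller $D$ by a second deletion round inside the giant component---removing the $8k$ ladder vertices---but as written your proof asserts a component structure the graph simply does not have.) Your side remarks about the $z_i$ vertices of the $K_5$-edges and about $\log m = \Oh(\log k)$ are fine and match how the paper's lemma handles these details.
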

\begin{proof}
We use the recursive formula of Lemma~\ref{lem:td}.
First, we delete from $H$ all vertices of the frame $H_F$
and all vertices $x^\beta$. Note that we have deleted 
only $17k$ vertices in this manner. By Lemma~\ref{lem:td},
it suffices to show that every connected component of the remaining
graph has treedepth $\Oh(k)$; we will in fact show a stronger
bound of $\Oh(\log k)$.

Observe that the remaining graph contains two types of connected components.
The first type are the $k$-choice gadgets $C^{i,\beta}$ for $1 \leq i \leq 3$
and $1 \leq \beta \leq k$; by Lemma~\ref{lem:red:choice}, 
every such gadget has treedepth $\Oh(\log k)$.
The second type are the $|E(p,q)|$-choice gadgets $\hat{C}^{p,q}$
for $1 \leq p < q \leq k$, together with the vertices
$c_e^\vdash, c_e, c_e^\dashv$ and the $K_5$-edges between them.
As $|E(p,q)| \leq k^2$ for every $1 \leq p < q \leq k$, by Lemma~\ref{lem:red:choice} the treedepth
of these connected components is also $\Oh(\log k)$.
This finishes the proof of the lemma.
\end{proof}

\subsection{Equivalence}

In the following two lemmata we show the equivalence of the constructed
\planarshort{} instance $(H,\ell)$ and the input \lmsname{} instance $(G,k)$,
completing the proof of Lemma~\ref{lem:red} and of Theorem~\ref{thm:main}.

\begin{lemma}[Completeness]
If $(G,k)$ is a YES-instance to \lmsname{}, then $(H,\ell)$ is a YES-instance to \planarshort{}.
\end{lemma}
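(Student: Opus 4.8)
The plan is to turn the permutation clique directly into a deletion set $X$ with $|X|=\ell$, together with a matching planar embedding of $H-X$. Write the clique as $\{(\alpha,\sigma(\alpha)):\alpha\in[k]\}$ for a permutation $\sigma$ of $[k]$ (this is exactly the ``one vertex per row and per column'' condition), and read it as the instruction to place $x^{\sigma(\alpha)}$ inside the face $f^\alpha$, so that $\beta\mapsto\sigma^{-1}(\beta)$ is the announced bijection between the vertices $x^\beta$ and the faces. Since the selected vertices form a clique, for every pair $p<q$ there is an edge $e^\ast_{p,q}\in E(p,q)$ with $\gamma(e^\ast_{p,q})=\sigma(p)$ and $\delta(e^\ast_{p,q})=\sigma(q)$; this is the only place where the clique property (as opposed to merely the permutation property) enters.

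First I would assemble $X$ from three parts. In each gadget $C^{i,\beta}$ take the minimum solution from Lemma~\ref{lem:red:choice}(2) that leaves $b_{\sigma^{-1}(\beta)}=y_i^{\sigma^{-1}(\beta),\beta}$ undeleted, contributing $2k$ vertices each and $6k^2$ in total; in each $\hat C^{p,q}$ take the minimum solution that leaves $c^\vdash_{e^\ast_{p,q}}$ undeleted, contributing $2|E(p,q)|$ each and $2m$ in total. Finally, for every edge $e$ delete one further vertex to break the $K_5$-edge $c_e^\dashv c_e$: for a selected edge $e=e^\ast_{p,q}$ delete $c_e$ (which also breaks $c_e^\vdash c_e$), and for every other edge delete $c_e^\dashv$ (its $c_e^\vdash$ has already been removed inside $\hat C^{p,q}$, so $c_e^\vdash c_e$ is already broken). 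This adds $m$ vertices, so $|X|=6k^2+2m+m=\ell$, matching the budget exactly as foreshadowed.

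Next I would exhibit the embedding, starting from the unique embedding of $H_F$ and placing $x^{\sigma(\alpha)}$ inside $f^\alpha$. The only surviving attachments inside $f^\alpha$ are the three paths $x^{\sigma(\alpha)}-y_i^{\alpha,\sigma(\alpha)}-v_i^\alpha$ for $i=1,2,3$ and, for each other row $r$, one surviving vertex of the selected edge between rows $\alpha$ and $r$ --- namely $c^\vdash_{e^\ast_{\alpha,r}}$ if $r>\alpha$ and $c^\dashv_{e^\ast_{r,\alpha}}$ if $r<\alpha$. In each case that vertex is adjacent to $v_0^\alpha$ and, because the second coordinate of the row-$\alpha$ clique vertex equals $\sigma(\alpha)$, also to $x^{\sigma(\alpha)}$. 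Suppressing the degree-two path vertices, $x^{\sigma(\alpha)}$ is therefore joined only to $v_0^\alpha,v_1^\alpha,v_2^\alpha,v_3^\alpha$ (with the connection to $v_0^\alpha$ of possibly high multiplicity), which appear consecutively on the boundary of $f^\alpha$; this fans out crossing-free inside the disk. Every choice-gadget remnant attaches to the rest of $H-X$ through its single undeleted $b_j$, and every leftover $c_e$ (non-selected $e$) forms an isolated pair of $K_4$'s glued at $c_e$, so all such pieces go into small private sub-disks, while the degenerated $K_5$-edges at deleted $c$-vertices become planar $K_4$'s.

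The remaining checks are routine: no two $x^\beta$ contend for one face (immediate from $\sigma$ being a bijection), and each choice-gadget remnant is planar by construction, being the gadget minus a planarization solution. The hard part will be verifying that the interior of a single face is genuinely planar, that is, that the up-to-$(k-1)$ parallel length-two $x^{\sigma(\alpha)}$--$v_0^\alpha$ paths, the three paths to $v_1^\alpha,v_2^\alpha,v_3^\alpha$, and all the gadget material hanging off them can be routed simultaneously inside $f^\alpha$ in the cyclic order forced by the frame. I would settle this by nesting the $v_0^\alpha$-paths around the $v_0^\alpha$ corner of the face and isolating each attached remnant in a pairwise disjoint sub-disk, so that the face interior is a single planar fan; this is precisely the configuration the frame was engineered to admit, and I expect no crossing to be forced.
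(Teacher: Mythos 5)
Your proposal is correct and follows essentially the same route as the paper: the deletion set $X$ you build (choice-gadget solutions keyed to the permutation, $c_e$ deleted for the $\binom{k}{2}$ selected edges, $c_e^\dashv$ for all others) is exactly the paper's solution, and your embedding argument---unique frame embedding, $x^{\sigma(\alpha)}$ fanned out to $v_0^\alpha,\ldots,v_3^\alpha$ inside $f^\alpha$, gadget remnants and leftover $K_4$'s confined to private disks or separate components---is precisely the verification the paper compresses into ``it is straightforward to verify,'' phrased via cut vertices rather than the paper's explicit appeal to $2$-connected components.
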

\begin{proof}
Let $\rho: [k] \to [k]$ be a solution to \lmsname{} on $(G,k)$, that is,
$\rho$ is a permutation of $[k]$ and $K := \{(p,\rho(p)): 1 \leq p \leq k\}$
is a clique in $G$. Consider the following set $X \subseteq V(H)$.
\begin{enumerate}
\item For each $i = 1, 2,3$ and $1 \leq \alpha \leq k$, $X$ contains a minimum
solution (i.e., of size $2k$) to \planarshort{} in the gadget $C^{i,\rho(\alpha)}$
that contains all vertices $y_i^{\alpha',\rho(\alpha)}$ for $1 \leq \alpha' \leq k$ except for
$y_i^{\alpha,\rho(\alpha)}$.
\item For each $1 \leq p < q \leq k$, denote by $e(p,q)$ the unique edge
in $E(p,q)$ such that $e(p,q) = (p,\rho(p))(q,\rho(q))$.
Then $X$ contains a minimum solution to \planarshort{} in the gadget $\hat{C}^{p,q}$
that contains all vertices $c_e^\vdash$ for $e \in E(p,q)$
except for $c_{e(p,q)}^\vdash$, the vertex $c_{e(p,q)}$ and all vertices
$c_e^\dashv$ for $e \in E(p,q)$ except for $c_{e(p,q)}^\dashv$.
\end{enumerate}
Note that we have introduced $3 \cdot k \cdot 2k = 6k^2$ vertices in the first step
and $3m$ vertices in the second step. Hence, $|X| = 3m+6k^2 = \ell$.
We now argue that $H \setminus X$ is planar. It suffices to prove it for each
$2$-connected component of $H \setminus X$. Note that the claim is trivial
or follows from Lemma~\ref{lem:red:choice} for each $2$-connected component
of $H \setminus X$ except for the one that contains the frame $H_F$.

Consider the unique planar embedding of $H_F$ and embed into each face $f^\alpha$
the vertex $x^{\rho(\alpha)}$. Note that each vertex $x^\beta$ is embedded into a different face.
It is straightforward to verify that $x^{\rho(\alpha)}$ can be embedded into $f^\alpha$
together with vertices $y_i^{\alpha,\rho(\alpha)}$ for $i=1,2,3$
and the vertices $c_e^\vdash$ or $c_e^\dashv$ that correspond to the edges of $G[K]$
incident to $(\alpha,\rho(\alpha))$.
Note that all other vertices of $H \setminus X$ lie in different $2$-connected components
than $H_F$ and, consequently, $H \setminus X$ is planar.
\end{proof}

\begin{lemma}[Soundness]
If $(H,\ell)$ is a YES-instance to \planarshort{}, then $(G,k)$ is a YES-instance to \lmsname{}.
\end{lemma}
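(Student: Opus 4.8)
The plan is to extract from a given solution $X$ to \planarshort{} on $(H,\ell)$ a permutation $\rho$ and verify that $K=\{(p,\rho(p)):1\le p\le k\}$ is a clique in $G$. First I would argue that the budget $\ell=3m+6k^2$ is spent exactly as forced: by Lemma~\ref{lem:red:choice}(1) each of the $3k$ gadgets $C^{i,\beta}$ costs at least $2k$, each gadget $\hat{C}^{p,q}$ costs at least $2|E(p,q)|$, and each $K_5$-edge $c_e^\dashv c_e$ forces one more deletion; summing gives $6k^2+2m+m=\ell$. Since these lower bounds are disjoint and sum to exactly $\ell$, every gadget must be solved by a \emph{minimum} solution, no vertex of the frame $H_F$ or any $x^\beta$ may be deleted, and in each edge gadget exactly one of $\{c_e^\vdash,c_e^\dashv,c_e\}$ is deleted beyond the choice-gadget cost. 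Because no frame vertex is deleted, $H_F$ retains its unique planar embedding, and I can speak of the faces $f^\alpha$.

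Next I would use the $C^{i,\beta}$ gadgets to define $\rho$. By Lemma~\ref{lem:red:choice}(3), in each $C^{i,\beta}$ at least one vertex $b_\alpha=y_i^{\alpha,\beta}$ survives; since the minimum solution leaves exactly one $b_\alpha$ undeleted, for each pair $(i,\beta)$ there is a unique $\alpha$ with $y_i^{\alpha,\beta}$ undeleted. The surviving $y_i^{\alpha,\beta}$ is adjacent to both $v_i^\alpha$ and $x^\beta$, neither of which is deleted, so $x^\beta$ must be embedded in a face incident to $v_i^\alpha$; the only bounded face incident to all of $v_1^\alpha,v_2^\alpha,v_3^\alpha$ is $f^\alpha$. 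The crux here — and the main obstacle — is showing that the assignment $\beta\mapsto\alpha$ is the same for all three values of $i$ and is a \emph{bijection}. Distinctness (injectivity) is exactly what the threefold repetition buys: if two distinct vertices $x^\beta,x^{\beta'}$ were both forced into the same face $f^\alpha$, then together with the three surviving connectors $y_i^{\alpha,\beta}$ and $y_i^{\alpha,\beta'}$ ($i=1,2,3$) and the frame vertices $v_1^\alpha,v_2^\alpha,v_3^\alpha$ one exhibits a $K_{3,3}$ (or $K_{3,3}$-minor) inside a single face, contradicting planarity of $H\setminus X$. Hence $\beta\mapsto\alpha$ is injective, and as it maps $[k]$ to $[k]$ it is a permutation; I set $\rho$ to be its inverse so that $x^{\rho(\alpha)}$ lives in $f^\alpha$.

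Finally I would verify that $K$ is a clique. Fix $1\le p<q\le k$. In the gadget $\hat{C}^{p,q}$ the minimum solution leaves exactly one $c_e^\vdash$ undeleted, say for $e\in E(p,q)$; for this $e$ the budget accounting forces that $c_e$ (rather than $c_e^\vdash$ or $c_e^\dashv$) is the deleted vertex of the $K_5$-edges, so both $c_e^\vdash$ and $c_e^\dashv$ survive. Now $c_e^\vdash$ is adjacent to $v_0^{p(e)}=v_0^p$ and to $x^{\gamma(e)}$, while $c_e^\dashv$ is adjacent to $v_0^q$ and $x^{\delta(e)}$, all undeleted. For the embedding to remain planar, $x^{\gamma(e)}$ must lie in a face incident to $v_0^p$, namely $f^p$, forcing $\gamma(e)=\rho(p)$; symmetrically $\delta(e)=\rho(q)$. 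Thus $e=(p,\rho(p))(q,\rho(q))\in E(G)$, so the edge of $G$ between $(p,\rho(p))$ and $(q,\rho(q))$ is present. As this holds for every pair $p<q$, the set $K$ is a clique with one element per row and per column, and $\rho$ is the desired solution to \lmsname{}. The routine parts are the exact budget bookkeeping and the planarity arguments pinning each $x^\beta$ to a unique incident face; the genuinely delicate point is the $K_{3,3}$ argument establishing injectivity, which is precisely why each connection is built three times.
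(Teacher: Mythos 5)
Your proposal follows the paper's proof step for step: the tight budget accounting (minimum solutions forced in every choice gadget, one deletion per $K_5$-edge $c_ec_e^\dashv$, hence no frame vertex and no $x^\beta$ deleted), the extraction of the assignment $\beta\mapsto\alpha$ from the surviving vertices $y_i^{\alpha,\beta}$, injectivity via a non-planarity argument, and the final clique verification through the surviving pair $c_e^\vdash, c_e^\dashv$ are all exactly the paper's steps, and those parts are sound.

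However, there is a gap in the one step you yourself flag as the crux. The branch sets you name --- $\{x^\beta\}$, $\{x^{\beta'}\}$, and the three sets $\{v_i^\alpha, y_i^{\alpha,\beta}, y_i^{\alpha,\beta'}\}$ for $i=1,2,3$ --- are only \emph{five} sets and form a $K_{2,3}$ minor, not a $K_{3,3}$. Since $K_{2,3}$ is planar, no contradiction follows from this structure alone: the abstract graph consisting of two vertices each joined to the same three vertices (even with a cycle through those three added) is planar, as one of the two vertices can simply be drawn on the other side of that cycle. The contradiction needs a \emph{sixth} branch set, adjacent to all three $v_i^\alpha$'s and representing the rest of the frame; the paper uses $\{u_1^\alpha,u_2^\alpha,u_3^\alpha\}$, which is connected via the ladder path and adjacent to each $v_i^\alpha$ through the rung edges, completing a genuine $K_{3,3}$ model. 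Alternatively, your phrase ``inside a single face'' gestures at a topological argument (both $x^\beta$ and $x^{\beta'}$ must be embedded in the disk bounded by $f^\alpha$, because $f^\alpha$ is the unique face of the uniquely embeddable frame incident to all of $v_1^\alpha,v_2^\alpha,v_3^\alpha$, and then a Jordan-curve argument applies), but that route requires explicitly invoking the uniqueness of the frame's embedding and spelling out the disk argument; as written, neither the minor argument nor the topological one is complete.
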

\begin{proof}
Let $X \subseteq V(H)$ be such that $|X| \leq \ell$ and $H \setminus X$ is planar.
By Lemma~\ref{lem:red:choice}, $X$ needs to contain at least $2k$ vertices from each
gadget $C^{i,\beta}$ for $i=1,2,3$ and $1 \leq \beta \leq k$; note that there
are $3k$ such gadgets.
Moreover, $X$ needs to contain at least $2|E(p,q)|$ vertices from each
gadget $\hat{C}^{p,q}$ for $1 \leq p < q \leq k$, which totals to at least $2m$ vertices.
Finally, for each $1 \leq i \leq m$, $X$ needs to contain at least one vertex
of the $K_5$-edge $c_ec_e^\dashv$. As $|X| \leq \ell = 6k^2 + 3m$, we infer that 
$|X| = \ell$ and $X$ contains a minimum solution to \planarshort{} on each introduced
choice gadget, and exactly one vertex from the pair $\{c_e,c_e^\dashv\}$ for each $1 \leq i \leq m$.
In particular, $X$ does not contain any vertex of the frame graph $H_F$, nor any vertex $x^\beta$, $1 \leq \beta \leq k$.

By the properties of the choice gadget $C^{i,\beta}$, there exists $\alpha(i,\beta)$
such that $y_i^{\alpha(i,\beta),\beta} \notin X$. Recall that $X$ does not contain any vertex
of $H_F$, and $H_F$ is $3$-edge-connected and, hence, admits a unique planar embedding
depicted on Figure~\ref{fig:red:frame}. As $X$ does not contain $x^\beta$,
we infer that $\alpha(i,\beta) = \alpha(i',\beta)$ for every $i,i' \in \{1,2,3\}$.
Hence, we may suppress the argument $i$ and henceforth analyze function $\alpha(\beta)$
such that $y_i^{\alpha(\beta),\beta} \notin X$ for every $1 \leq \beta \leq k$ and $i = 1,2,3$.
Note that $x^\beta$ needs to be embedded into the face $f^{\alpha(\beta)}$ of $H_F$ in any
planar embedding of $H \setminus X$.

We now argue that $\alpha(\cdot)$ is a permutation. By contradiction, let $\alpha(\beta) = \alpha(\beta')$ for some $\beta \neq \beta'$. It is straightforward to verify that the following sets form a model of a $K_{3,3}$ minor in $H \setminus X$, contradicting its planarity.
\begin{enumerate}
\item $\{v_i^\alpha, y_i^{\alpha,\beta}, y_i^{\alpha,\beta'}\}$ for $i=1,2,3$;
\item $\{x^{\beta}\}$ and 
$\{x^{\beta'}\}$;
\item $\{u_{1}^\alpha, u_{2}^\alpha, u_{3}^\alpha\}$.
\end{enumerate}
We infer that $\alpha(\cdot)$ is a permutation of $[k]$. We claim 
$K := \{(\alpha(\beta),\beta): 1 \leq \beta \leq k\}$ induces a clique in $G$. That is,
that $\rho := \alpha^{-1}$ is a solution to \lmsname{} on $(G,k)$.

Pick arbitrary $1 \leq p < q \leq k$. Our goal is to prove that $(p,\rho(p))(q,\rho(q)) \in E(G)$.
Consider the choice gadget $\hat{C}^{p,q}$. By Lemma~\ref{lem:red:choice},
there exists $e \in E(p,q)$ such that $c_e^\vdash \notin X$.
Consequently, $c_e \in X$ due to the $K_5$-edge $c_e^\vdash c_e$,
and thus $c_e^\dashv \notin X$.
As $c_e^\vdash$ is adjacent to both $v_0^p$ and $x^{\gamma(e)}$, we infer
that $x^{\gamma(e)}$ is embedded into the face $f^p$ and, consequently,
$\alpha(\gamma(e)) = p$. Symmetrically, we infer that $\alpha(\delta(e)) = q$.
Thus, $e = (p,\rho(p))(q,\rho(q)) \in E(G)$ and the lemma is proven.
\end{proof}

\section*{Acknowledgements}

We thank Tomasz Kociumaka for numerous discussions on the complexity of vertex deletion problems
to minor-closed graph classes. 

Research supported by Polish National Science Centre grant DEC-2012/05/D/ST6/03214.

\bibliographystyle{abbrv}
\bibliography{genus}
\end{document}